\documentclass[12pt]{report} 

\usepackage[a4paper,portrait,total={17cm,25cm}]{geometry}

\usepackage{authblk}
\title{Non-recursive trade-offs for two-way machines\\
{\small\textit{International Journal of Foundations of Computer Science}, 16:943–956, 2005.}}
\author{Christos Kapoutsis}
\affil{Computer Science and Artificial Intelligence Laboratory\\
    Massachusetts Institute of Technology\\
    Cambridge, MA 02139, USA, \texttt{cak@mit.edu}}
\date{}

\usepackage[centertags]{amsmath}
\usepackage{amsthm}
\newcommand{\modelofcomputation}[1]{\text{\scshape\mdseries\rmfamily #1}}

\newcommand{\odfa}{\modelofcomputation{{{\footnotesize 1}dfa}}}
\newcommand{\onfa}{\modelofcomputation{{{\footnotesize 1}nfa}}}

\newcommand{\dca}{\modelofcomputation{{dca}}}

\newcommand{\odpa}{\modelofcomputation{{\footnotesize 1}dpa}}
\newcommand{\onpa}{\modelofcomputation{{\footnotesize 1}npa}}

\newcommand{\tm}{\modelofcomputation{tm}}
\newcommand{\dtm}{\modelofcomputation{dtm}}

\newcommand{\complexityproblem}[1]{\text{\upshape\mdseries\sffamily #1}}

\newcommand{\looping}{\overline{\complexityproblem{HALTING}}}

\newcommand{\ggG}{\varGamma}

\newcommand{\ggP}{\varPi}
\newcommand{\ggS}{\varSigma}

\newcommand{\set}[1]{\{#1\}}

\newcommand{\amp}{\;\;\&\;\;}

\newcommand{\lang}[1]{{(#1)}}
\newcommand{\desc}[1]{{\langle#1\rangle}}

\newcommand{\s}{^*}

\newcommand{\nats}{{\mathbb{N}}}

\newcommand{\sy}[1]{\textup{\texttt{#1}}}

\newcommand{\bla}{\sqcup}
\newcommand{\dotz}{\dot{\hspace{-.3ex}\sy{0}}}
\newcommand{\doto}{\dot{\hspace{-.2ex}\sy{1}}}
\newcommand{\symz}{\sy{0}}
\newcommand{\symo}{\sy{1}}

\newcommand{\FF}{\complexityproblem{f}}

\newcommand{\PP}{\ggP}
\newcommand{\PPy}{\ggP_{\textup{yes}}}
\newcommand{\PPn}{\ggP_{\textup{no}}}
\newcommand{\II}{\complexityproblem{I}}
\newcommand{\IIy}{\II_{\textup{yes}}}
\newcommand{\IIn}{\II_{\textup{no}}}
\newcommand{\eot}{\complexityproblem{E}}
\newcommand{\eoty}{\eot_{\textup{yes}}}
\newcommand{\eotn}{\eot_{\textup{no}}}

\newcommand{\LLL}[1]{(#1\/)}
\newcommand{\SSS}[1]{|#1\/|}

\newenvironment{nicelist}
{\begin{list}
   {$\bullet$}
   {\setlength{\topsep}{0pt}
    \setlength{\parsep}{0pt}
    \setlength{\itemsep}{0pt}
    \setlength{\labelwidth}{\leftmargin}
    \setlength{\itemindent}{0ex}
    \setlength{\listparindent}{0pt}}}
{\end{list}} 
\usepackage{rotating}

\theoremstyle{plain}
\newtheorem{theorem}{Theorem}
\newtheorem{lemma}[theorem]{Lemma}

\newtheorem*{theorem*}{Theorem}
\newtheorem*{maintheorem*}{Main theorem}
\newtheorem*{mainlemma*}{Main lemma}
\newtheorem*{lemma*}{Lemma}

\theoremstyle{definition}

\newtheorem*{definition*}{Definition}

\newtheorem{xample}[theorem]{Example}

\usepackage{amssymb}

\begin{document}
\maketitle

\begin{abstract}
If the machines of some type~$\modelofcomputation{a}$ have enough
resources to (i)~solve problems that no machine of
type~$\modelofcomputation{b}$ can solve, and (ii)~simulate any unary
two-way deterministic finite automaton that has access to a
linearly-bounded counter, then typically no recursive function can
upper bound the increase in the size of description when a machine of
type~$\modelofcomputation{a}$ is replaced by an equivalent machine of
type~$\modelofcomputation{b}$.  
\end{abstract}

\section{Introduction}
\label{sec:introduction}

The origins of descriptional complexity of formal machines can be
traced back to~\cite{rasc59}, when Rabin and Scott showed that every
one-way nondeterministic finite automaton ($\onfa$) has a
deterministic equivalent ($\odfa$) that is at most
\emph{exponentially} larger. The question whether this exponential
blowup in size can be avoided has been one of the first steps in the
general direction of investigating the relative succinctness of
different types of machines as language descriptors.

A similar but qualitatively distinct situation emerged several
years later, when Stearns~\cite{st67} studied the relation between
one-way deterministic pushdown automata ($\odpa$s) and $\odfa$s.
Differently from above, he knew that not every $\odpa$ has a
$\odfa$-equivalent. But, similarly as above, he proved that, whenever
such equivalents exist, the smallest among them are at most
\emph{triple-exponentially} larger than the~$\odpa$.

This naturally lead to the corresponding question for one-way
nondeterministic pushdown automata ($\onpa$s): in the case where a
$\onpa$ has $\odfa$-equivalents, what is an upper bound for the size
of the smallest among them? The answer was again qualitatively new, by
Meyer and Fischer~\cite{mefi71}, who showed that every such bound
grows (as a function of the size of the $\onpa$) faster than any
computable function. Put another way, among the cases where it is
possible to convert a $\onpa$ into a $\odfa$, the trade-off in the
size of description is in general \emph{non-recursive}. Several
refinements of this result followed~\cite{va76,scsz77}.

In an important development, Hartmanis~\cite{ha80} later explained
that the recursiveness of the trade-off from a type of
machines~$\modelofcomputation{a}$ to a not-as-powerful type of
machines~$\modelofcomputation{b}$ typically implies the
recognizability (semi-decidability) of the corresponding
\emph{inadequacy problem}: ``given a machine of
type~$\modelofcomputation{a}$, check that it has no
$\modelofcomputation{b}$-equivalents''. This greatly simplified the
proofs of~\cite{mefi71,va76,scsz77}, while it nicely revealed the
connections of the entire discussion to G\"odel's theorem that the
addition of an extra axiom to a formal system typically results in
non-recursively shorter proofs for some of its theorems~\cite{ha83}.

Today many refinements of the above results are known and
non-recursive trade-offs have emerged in numerous other comparisons
between different types of machines. Comprehensive surveys can be
found in~\cite{goea02,ku04}.

\subsection{This study} 

In a remark in~\cite{haba79a}, Hartmanis and Baker showed that \emph{a
non-recursive trade-off can occur even when an optimal algorithm
replaces a near-optimal one}.\footnote{The reader is referred
to~\cite{haba79a,ha83,ha89} for a quite interesting discussion of the
implications that this might have to our search for optimal
algorithms.} For example, converting an $n^{2+\epsilon}$-space
deterministic Turing machine ($\dtm$) into one that uses only
$n^2$-space involves a non-recursive blowup in the size of
description. In the pattern of~\cite{ha80}, they derived this
observation from the unrecognizability of the inadequacy problem from
near-optimal to optimal machines (from $n^{2+\epsilon}$-space to
$n^2$-space $\dtm$s), which in turn was shown to be a consequence of
the fact that the near-optimal complexity class is strictly larger
than the optimal one (some $n^{2+\epsilon}$-space $\dtm$s have no
$n^2$-space equivalent).

In the present study we refine that argument. We prove a general
theorem that directly shows the non-recursiveness of the trade-off in
numerous conversions between machines of different power. In loose
terms, our theorem states the following:

\medskip\noindent{\itshape If two types of machines
$\modelofcomputation{a}$~and~$\modelofcomputation{b}$ are such that
\begin{nicelist}
\item[\textup{(1)}] some machine of type~$\modelofcomputation{a}$ has
no equivalent machine of type~$\modelofcomputation{b}$, and
\item[\textup{(2)}] a machine of type~$\modelofcomputation{a}$ has
enough resources to simulate a unary two-way  deterministic finite
automaton which has access to a linearly-bounded counter,
\end{nicelist}
then the trade-off from machines of type~$\modelofcomputation{a}$ to
machines of type~$\modelofcomputation{b}$ is non-recursive.}

\medskip\noindent For example, for the previous remark on space, we
argue that, since $n^2 = o(n^{2+\epsilon})$, there exist
$n^{2+\epsilon}$-space $\dtm$s with no $n^2$-space equivalent, so that
condition~(1) is clearly true; that (2)~is also true follows from the
easy observation that any $\Omega(\lg n)$ amount of space suffices for
the simulation of a linearly-bounded counter.

The most characteristic applications concern the successive levels of
hierarchies of two-way multi\emph{pointer}\footnote{By `pointer' we
mean any of the following: a linearly-bounded counter, a \emph{blind}
read-only head (i.e., a head that cannot distinguish among 
different input symbols, although it \emph{can} distinguish between an
input symbol and an end-marker), an ordinary read-only head, a
\emph{sensing} read-only head (i.e., one that can sense which of
the other heads are at the same cell as itself), or a
pebble.\label{note:1}} automata. For example, the following trade-offs
are non-recursive:\footnote{In each case, the reference indicates
where Condition~(1) of the theorem has been established. For
Condition~(2), it is always easy to see that it is also satisfied.}
\begin{nicelist}
\item
from $k+1$ to $k$ counters, on linearly-bounded two-way deterministic
counter automata  (unary or not)~\cite{mo80},
\item
from $k+1$ to $k$ heads,  on two-way multi-head finite automata
(deterministic or not, unary or not)~\cite{mo76,mo77,mo80},
\item
from $k+1$ to $k$ heads,  on two-way multi-head pushdown automata
(deterministic or nondeterministic)~\cite{ib73},
\end{nicelist}
for all $k$. Sometimes, we can only be as refined as the hierarchy is
known to be:
\begin{nicelist}
\item
from $k+2$ to $k$ registers,  on linearly-bounded register machines
(deterministic or nondeterministic)~\cite{mo80},
\item
from $k+2$ to $k$ counters,  on linearly-bounded two-way
nondeterministic coun\-ter automata  (unary or not)~\cite{mo80},
\end{nicelist}
for all $k$. Similarly, the trade-off is non-recursive
\begin{nicelist}
\item
from~3 to~2 heads,   on a \emph{simple}\footnote{In a \emph{simple}
multi-head automaton, every input head after the first one is
\emph{blind} (cf. Footnote~\ref{note:1}).}  two-way deterministic
finite automaton~\cite{duga82}; it remains non-recursive even when we
start from a 2-head two-way deterministic finite automaton, or from a
1-head two-way deterministic pushdown automaton~\cite{duga82}.
\end{nicelist}
Finally, we can even conclude the non-recursiveness of the trade-off,
for all $k\geq 2$:
\begin{nicelist}
\item
from $k+1$ to $k$ work-tape symbols,   on $\tm$s (deterministic or
not) that, on every input of length $n$,  use no more than $\lg n$
work-tape cells~\cite{se77}  (even if the starting $\tm$ has unary
input alphabet, but  then only for sufficiently large~$k$).
\end{nicelist}
Other conversions between machines of different power can be
treated similarly.


Returning to the statement of the theorem above, we warn that it is,
in fact, incomplete. Additional conditions have to be met, concerning
$\modelofcomputation{a}$~and~$\modelofcomputation{b}$, their
descriptions, and how `size' is measured. However, in most interesting
cases these conditions are trivially satisfied (in the above examples
they are), so that listing them here would be a distraction. The
complete list is contained in the formal statement of the theorem in
Section~\ref{sec:main-theorem}.

\subsection{Outline} 
The next section describes the framework of this study in more
detail. Section~\ref{sec:main-theorem} states and proves the theorem,
except for an important lemma, which is proved in
Section~\ref{sec:main-lemma}, after some preparation in
Section~\ref{sec:programming-counters}. The discussion is abstract
enough to also cover cases where
$\modelofcomputation{a}$~or~$\modelofcomputation{b}$ denote types of
language descriptors other than machines (e.g., regular expressions,
grammars). For a more concrete discussion, see~\cite{ka04b}. (There, a
special case of the theorem is proved, namely the case concerning the
trade-off from $k+1$ to $k$ heads on two-way multi-head finite
automata.)

\section{Preliminaries}
\label{sec:framework}

We denote the set of positive integers by $\nats$. For $n,a\in\nats$,
we write $\lg_a n$ to denote $\lfloor\log_a n\rfloor$. The set of all
finite strings over an alphabet $\ggG$ is denoted by $\ggG\s$.

A (\emph{promise}) \emph{problem} over $\ggG$ is any pair
$\PP=(\PPy,\PPn)$ where $\PPy$, $\PPn$ are disjoint subsets of
$\ggG\s$. A $\tm$ \emph{recognizes} $\PP$ if it accepts all $x\in\PPy$
and rejects (possibly by looping) all $x\in\PPn$. If some $\tm$
recognizes $\PP$, we say $\PP$ is \emph{recursively enumerable} or
(\emph{Turing-}) \emph{recognizable}. For $\PP'$ also a problem over
$\ggG$, we write $\ggP\leq\ggP'$ and say $\PP$ \emph{reduces to}
$\PP'$ iff there is a $\tm$ that, on input $x\in\PPy\cup\PPn$,
eventually halts with an output $y$ such that  $ x\in\PPy \implies
y\in\PPy' $ and $ x\in\PPn \implies y\in\PPn' $.  If some
unrecognizable $\PP$ reduces to $\PP'$, then clearly $\PP'$ is
unrecognizable, as well.

If $\PPn=\overline{\PPy}$, then $\PP$ is also called a \emph{language}
and is adequately described by $\PPy$ alone. If in addition $\PPy$
contains exactly all sufficiently long strings for some interpretation
$0\leq l\leq\infty$ of `sufficiently long', $ \PPy = \set{ x\in\ggG\s
\mid \text{length($x$)} \geq l } $, then we say $\PP$ \emph{obeys a
threshold} (note that then $\PPy$ is empty iff this threshold is
infinite); a machine that solves $\PP$ is similarly said to obey the
same threshold.

\subsection{Descriptional systems}
\label{sec:descriptional-systems}

A \emph{descriptional system} over the alphabets $\ggS$ and $\ggG$ is
any set $D\subseteq\ggS\s$ of \emph{names} (or \emph{descriptors}),
along with two total functions $\LLL{\mbox{$\cdot$}}_D$ and
$\SSS{\mbox{$\cdot$}}_D$, mapping every name $d\in D$ to its
\emph{language} $\LLL{d}_D\subseteq\ggG\s$ and its \emph{size}
$\SSS{d}_D\in\nats$, respectively.

For a standard example, fix a binary encoding of all $\odfa$s with
input alphabet $\{\texttt{a},\texttt{b},\texttt{c}\}$, say. This
induces the descriptional system over $\{0,1\}$ and
$\{\texttt{a},\texttt{b},\texttt{c}\}$ that contains all encoding
strings as names and maps each of them to the language accepted by the
corresponding $\odfa$ (as its language) and to the number of states in
that $\odfa$ (as its size). (Alternatively, the size of a name could
just be its length.)

A system $D$ is \emph{decidable} if the membership problem for its
names is decidable. That is, if there exists a $\tm$ $U_D$ that always
halts and is such that:
$$ \text{for all $d\in D$ and $w\in\ggG\s$:} \qquad U_D(d,w)\text{
accepts}\iff w\in\LLL{d}_D.
$$ Thus, the system of the previous example is clearly decidable,
whereas a system containing binary encodings of $\tm$s would be
undecidable.

In order to be able to compare two descriptional systems $D$ and $E$
in terms of their relative succinctness, we require that they are
\emph{comparable}, in the sense that (i)~they are defined over the
same alphabets, and that (ii)~their $\LLL{\mbox{$\cdot$}}$ and
$\SSS{\mbox{$\cdot$}}$  mappings agree on all common
names,\footnote{It is only for simplicity that we require the
agreement for $\SSS{\mbox{$\cdot$}}$; we do not actually need it.}
$$ \text{for all $z\in D\cap E$: } \quad  \LLL{z}_D = \LLL{z}_E
\quad\text{ and }\quad \SSS{z}_D = \SSS{z}_E,
$$ so that subscripts can be dropped: for all $z\in D\cup E$,
$\LLL{z}$ and $\SSS{z}$ are unambiguous.  For such systems, the
comparison of $E$ against $D$ involves two natural notions:

\mbox{\textsc{i}. }For a name $e\in E$, there \emph{may} or \emph{may
not} exist a name in $D$ that maps to the same language. In the latter
case, we say that \emph{$D$ is inadequate for describing the language
of $e$} and, accordingly, we call the associated computational
problem, ``given an $e\in E$, check that no $d\in D$ maps to
$\LLL{e}$'', the \emph{inadequacy problem from $E$ to $D$}. Formally,
this is the promise problem $\II=(\IIy,\IIn)$, with:
\begin{align*}
\IIy &= \{ e\in E \mid (d)\neq(e)\text{, for all $d\in D$} \}, \\ \IIn
&= \{ e\in E \mid (d)=(e)\text{, for some $d\in D$} \}.
\end{align*}
Notice that $e$ is promised to be in $E$, so that solving $\II$ does
not require checking membership in $E$ (which might be hard, even
impossible).

\mbox{\textsc{ii}. }When a name $e\in E$ does have \emph{equivalent}
names in $D$ (i.e., names mapping to $\LLL{e}$), we naturally ask how
larger than $e$ the smallest of these $D$-equivalents are. The typical
way to answer this question is with a function $\FF:\nats\to\nats$
that upper bounds this increase in size, in terms of the size of
$e$. Namely, $\FF$ is such that
\begin{quote}
for all $s\in\nats$ and for all $e\in E$ of size $s$: if $D$ contains
names that are equivalent to $e$, then at least one of these names is
of size at most~$\FF(s)$.
\end{quote}
We say that $\FF$ \emph{upper bounds the trade-off} (for the
conversion) \emph{from $E$ to $D$}. When a computable such upper bound
exists, we say \emph{the trade-off from $E$ to $D$ is recursive}.

As first noted by Hartmanis~\cite{ha80}, discussions (\textsc{i}),
(\textsc{ii}) are not unrelated: unrecognizability of the inadequacy
problem typically implies the trade-off is non-recursive.

\begin{lemma}[Hartmanis] \label{lem:Hartmanis}
Suppose $D$, $E$ are two \emph{comparable} descriptional systems over
alphabets $\ggS$ and $\ggG$, and that the following conditions are met:
\begin{nicelist}
\item[\textup{(H$_1$)}]   both $D$ and $E$ are decidable,
\item[\textup{(H$_2$)}]   for every $e\in E$, we can effectively
compute its size $\SSS{e}$, and
\item[\textup{(H$_3$)}]   \label{item:H2i} there is a halting $\tm$
that, on input $s\in\nats$, produces a list $Z\subseteq\ggS\s$ such
that
\begin{nicelist}
\item[\textup{(i)}] the non-$D$ names can be recognized in $Z$:
$(Z\cap\overline{D},Z\cap D)$ is recognizable.
\item[\textup{(ii)}]  the languages of the $D$-names in $Z$ cover all
and only those languages over $\ggG$ that are supported by a name in
$D$ of size at most $s$:
$$ \{ \LLL{z} \mid z\in Z\cap D \} = \{ \LLL{d} \mid d\in D \amp
\SSS{d}\leq s \}.
$$
\end{nicelist}
\end{nicelist}
Then, recursiveness of the trade-off from $E$ to $D$ implies the
corresponding inadequacy problem is recognizable.
\end{lemma}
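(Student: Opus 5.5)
Assume a computable $\FF$ upper bounds the trade-off from $E$ to $D$. I will build a $\tm$ $M$ that recognizes $\II$: it accepts every $e\in\IIy$ and never accepts any $e\in\IIn$. On input $e$, which is promised to lie in $E$, the machine $M$ proceeds in four steps.
\begin{nicelist}
\item[(a)] Compute $s=\SSS{e}$, using (H$_2$).
\item[(b)] Compute $\FF(s)$.
\item[(c)] Run the halting $\tm$ of (H$_3$) on $\FF(s)$ to get the finite list $Z$.
\item[(d)] Accept iff every $z\in Z$ can be certified as ``bad'', meaning either $z\notin D$ or $\LLL{z}\neq\LLL{e}$.
\end{nicelist}
By (H$_3$)(ii), the bad-for-all condition says exactly that no $d\in D$ with $\SSS{d}\leq\FF(s)$ is equivalent to $e$. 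By the choice of $\FF$, this in turn holds iff $e\in\IIy$.

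The work is in making step (d) semi-decidable. For each $z\in Z$ I run two searches in parallel (dovetailed). The first is the recognizer of $(Z\cap\overline{D},Z\cap D)$ from (H$_3$)(i), which certifies $z\notin D$. The second is a search over all $w\in\ggG\s$ for a witness with $w\in\LLL{z}$ but $w\notin\LLL{e}$, or vice versa. Membership in $\LLL{z}$ is tested with $U_D(z,w)$, and membership in $\LLL{e}$ with $U_E(e,w)$; both exist and always halt by (H$_1$). I may apply $U_D$ to $z$ only once $z$ is known to lie in $D$, and this is the delicate point. I handle it by running $U_D(z,w)$ with no claim about its meaning unless $z\in D$. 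If $z\notin D$, the first search eventually succeeds, so a spurious ``witness'' produced by the second search is harmless: $z$ is bad either way. If $z\in D$, the outputs of $U_D$ are correct, so a witness certifies $\LLL{z}\neq\LLL{e}$. A run of $U_D$ on a non-name might fail to halt; to guard against this I also dovetail over the runs for different $w$. Comparability of $D$ and $E$ guarantees that $\LLL{\cdot}$ is unambiguous on $Z\cap D$, so the comparison with $\LLL{e}$ is meaningful.

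$M$ accepts once every $z\in Z$ has received a certificate. Since $Z$ is finite, this is a finite conjunction of semi-decidable events.

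For correctness, suppose first $e\in\IIy$. Then every $z\in Z\cap D$ has $\LLL{z}\neq\LLL{e}$, so some witness $w$ exists and is found. Every $z\in Z\cap\overline{D}$ is eventually certified by (H$_3$)(i). Hence $M$ accepts.

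Now suppose $e\in\IIn$. By the choice of $\FF$, some $d\in D$ with $\SSS{d}\leq\FF(s)$ has $\LLL{d}=\LLL{e}$. By (H$_3$)(ii), some $z\in Z\cap D$ has $\LLL{z}=\LLL{e}$. For this $z$ the recognizer never certifies $z\notin D$, and no true witness exists; since $z\in D$, $U_D$ is correct and produces no false witness. So $z$ is never certified and $M$ never accepts.

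The main obstacle is this soundness on $\IIn$ inputs: a $z$ outside $D$ must never block acceptance, and a $z$ inside $D$ must never be falsely certified. The dovetailing argument above resolves it.
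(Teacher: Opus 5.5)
Your proof is correct and follows the paper's argument essentially step for step. You compute $\FF(\SSS{e})$, generate $Z$ via (H$_3$), and cross off each $z\in Z$ either through the (H$_3$)(i) recognizer or through a disagreement witness $w$ found with $U_D$ and $U_E$, accepting once every $z$ is crossed off. The extra dovetailing over $w$ is harmless but unnecessary: by the paper's definition of a decidable system, $U_D$ halts on every input, including non-names, as you yourself state when invoking (H$_1$).
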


\noindent
Before giving the proof, let us remark how mild conditions
(H$_1$)--(H$_3$) are. For most interesting cases, the first two of
them are trivially true and (H$_3$) is satisfied via the $\tm$ that
simply lists all names in $D$ that have size $\leq s$ (so that the
problem of~(i) is trivially decidable and the two sets of~(ii)
trivially identical). Having (H$_3$) as complicated simply covers some
special cases (e.g., comparing general to unambiguous context-free
grammars~\cite[Example~2]{ha83}).

\proof{Suppose $D$, $E$ are as in the statement and $\FF$ is a
computable upper bound for the trade-off from $E$ to $D$.  To check
that a given $e\in E$ has no $D$-equivalents, we first compute
$s=\FF(\SSS{e})$ (by (H$_2$) and since $\FF$ is computable) and then
run the $\tm$ guaranteed by (H$_3$) on~$s$, to produce a (finite,
since the $\tm$ is halting) list of names $Z=\{z_1, z_2, \dots,
z_k\}$. At this moment, we know (by the selection of $\FF$ and
(H$_3$ii)) we should accept iff every $D$-name in $Z$ maps to a
language different from~$\LLL{e}$.  Equivalently, we should accept
iff: for every $z\in Z$,  \emph{either} $z$ is not a $D$-name
\emph{or} $z$ is a $D$-name and $\LLL{z}$, $\LLL{e}$ differ at one or
more $w\in\ggG\s$.

In order to check this, we start simulating, in two parallel threads:
[\textsc{i}]~the recognizer guaranteed by~(H$_3$i) on each of
$z_1,z_2,\dots,z_k$ in parallel, and [\textsc{ii}]~for all
$w\in\ggG\s$: the machines $U_E$ and $U_D$ (guaranteed by~(H$_1$))
respectively on $(e,w)$ and on each of
$(z_1,w),(z_2,w),\dots,(z_k,w)$. Whenever a $z\in Z$ is accepted in
thread~\textsc{i}, we cross it off the list; whenever a $z\in Z$ is
found to disagree with $e$ on some $w$ in thread~\textsc{ii}, it is
crossed off the list, as well; if the list ever gets empty, we accept.

Clearly, every string in $Z$ that is not a $D$-name, will eventually
be crossed off, in thread~\textsc{i}; similarly, each $D$-name that is
inequivalent to $e$ will also be eventually removed, in
thread~\textsc{ii}; moreover, neither thread can delete a $D$-name
that is equivalent to $e$. Hence, the list will eventually get empty
iff $e$ had no $D$-equivalent in the original list $Z$; which is true
iff $e$ has no $D$-equivalent at all.}

\subsection{Multi-counter automata}
\label{sec:multicounter-automata}

A \emph{deterministic automaton with $k$ counters}\footnote{In this
definition the reader will recognize \emph{the unary version} of the
\emph{two-way $(k-1)$-counter machines} of~\cite{fimero68,mo80} or of
other automata from elsewhere; they all describe the natural notion of
a \emph{unary two-way deterministic finite automaton that has
additional access to $k-1$ counters}. The present model is a bit
simpler, in the sense that the input is directly taken to be the upper
bound for one of the counters, saving us the redundant (in the unary
case) notion of an input tape. (Note that this definition differs from
that of~\cite{ka04b}, where the upper bound is applied to all
counters.)} ($\dca_k$) consists of a finite state control and $k$
counters, each of which can store a nonnegative integer. One of the
counters is distinguished as \emph{primary}, the rest being referred
to as \emph{secondary}. The input to the automaton is a nonnegative
upper bound $n$ for the primary counter. The machine starts at a
designated \emph{start state} with all its counters set to~0. At every
step, based on its current state, the automaton decides which counter
it should act upon and whether it should decrease it or increase
it. Then the action is \emph{attempted}. An attempt to decrease fails
iff the counter already contains~0; an attempt to increase fails iff
the counter is the primary one and it already contains~$n$; an attempt
to increase a secondary counter never fails. A failed attempt leaves
the counter contents intact; a successful attempt updates the counter
contents accordingly. Based on its current state and on whether the
attempt succeeded or not, the automaton selects a new state and moves
to it. The input is \emph{accepted} if the machine ever enters a
designated \emph{final state}, and the \emph{language} of the machine
is exactly the set of inputs that it accepts. If, for all~$n$, the
behavior of the automaton guarantees no secondary counter ever grows
larger than~$n$, we say the automaton is \emph{$($linearly$)$
bounded}.

We will be interested in a special version of the emptiness problem
for multi-counter automata. One way to introduce it is to start with
the emptiness problem for $\tm$s (``given a description of a $\tm$,
check that its language is empty''), which is well known to be
unrecognizable~\cite{houl79}, and to consider some `simplifications'
to~it:
\begin{nicelist}
\item
What happens if, instead of a full-fledged $\tm$, the machine we are
given is `simpler'? Say, a multi-counter automaton?  Or just a
$\dca_2$? Clearly, checking emptiness becomes `simpler', too. Does it
also become recognizable?
\item
What if, in addition, the given $\dca_2$ is promised to be
\emph{bounded}? And \emph{terminating}, too? And to also
\emph{obey a threshold}?  As the promise gets
stronger, checking emptiness again becomes `simpler'. But does it
become recognizable?
\end{nicelist}
So, the problem we want to define is: ``given a description of a
$\dca_2$ that is promised to be bounded and terminating and to obey a
threshold, check that its language is empty.'' In formal dialect,
$\eot=(\eoty,\eotn)$, where
$$ \eoty =  \{ z\in\desc{\dca_2'} \mid  \LLL{z}=\emptyset  \}
\qquad\text{and}\qquad \eotn =  \{ z\in\desc{\dca_2'} \mid
\LLL{z}\neq\emptyset  \},
$$ $\desc{\dca_2'}$ stands for the set of descriptions (under a fixed
encoding) of all terminating, bounded $\dca_2$s that obey a threshold,
and $\LLL{z}$ denotes the language of the machine described by
$z$. Interestingly, although not surprisingly, even for such a weak
automaton and under such a strong promise, emptiness remains
unrecognizable:\footnote{Note that clearly $\eot\in\Pi_1$ and that the
proof of Lemma~\ref{lem:main} will show $\eot$ is $\Pi_1$-complete. We
also remark that, under no promise and after non-trivially  modifying
the definition of $\dca_2$s, $\eot$ is the emptiness problem for
2-register machines, which is well known to be
$\Pi_1$-complete~\cite{mi61}.}

\begin{lemma} \label{lem:main}
$\eot$ is unrecognizable.
\end{lemma}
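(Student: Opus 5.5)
We prove the lemma by reducing the complement of the halting problem, $\looping=(\looping_{\textup{yes}},\looping_{\textup{no}})$ (``given a $\dtm$ $M$, check that $M$ does not halt on empty input''), to $\eot$. Since $\looping$ is well known to be unrecognizable, so will be $\eot$. The reduction maps $\desc{M}$ to the description of a $\dca_2$ $C_M$ that, on input $n$, tries to simulate $M$ on the empty tape within a resource budget determined by $n$. If the simulation reaches a halting configuration of $M$ within the budget, $C_M$ accepts. If the budget is exceeded, $C_M$ rejects.

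Then $C_M$ is terminating, because the budget bounds the run. It is bounded, provided the simulation is arranged so that the secondary counter never exceeds $n$. It obeys a threshold: if $M$ halts within $t$ steps, then every $n$ large enough to accommodate those $t$ steps is accepted, monotonically in $n$, so $\LLL{C_M}=\{n\mid n\geq l\}$ for some finite $l$. If $M$ never halts, $\LLL{C_M}=\emptyset$ (threshold $\infty$). Hence $\desc{M}\in\looping_{\textup{yes}}$ implies $C_M\in\eoty$, and halting implies $C_M\in\eotn$, and the map $M\mapsto C_M$ is clearly computable.

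The main step, and the main obstacle, is building $C_M$ from only two counters, one of them capped at $n$, while keeping the secondary counter linearly bounded. I would route this through the classical Minsky chain: a $\dtm$ is simulated by a deterministic machine with two stacks, each stack by two counters, and then all counters by two counters via the G\"odel encoding $2^a3^b5^c7^d$. In the Minsky scheme the two counters hold this encoding and zero, and multiplying or dividing the encoding by a prime is done by repeatedly moving units between them. In our setting the capped primary counter serves as one of these two counters and the secondary counter as the other.

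I would make the budget explicit. Start the primary counter at $0$ and let the two counters carry the encoded configuration. Whenever an increment of the primary counter fails (the value would exceed $n$), $C_M$ rejects; this is exactly ``budget exceeded''. Every transfer in the simulation goes between the two counters, so whenever the secondary counter holds some value, that value was, or will be, held by the primary counter. I would enforce this by making each multiplication by a prime $p$ go through the primary counter; equivalently, I would keep a bit of state recording where the current value lives. As a result the secondary counter never exceeds $n$, so $C_M$ is bounded.

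Termination needs one more point: a deterministic counter simulation of a looping $M$ must either overflow or cycle. I would make the encoding strictly increase, for instance by multiplying by an extra prime $11$ at every simulated step, which acts as a step counter. Then the value passes $n$ after finitely many steps and $C_M$ rejects. The same device gives monotonicity in $n$: a run that succeeds with cap $n$ never touches the cap, so it runs identically for every larger $n$. Checking these invariants carefully for the division and remainder-test subroutines of the Minsky construction, where failed decrements are used for zero tests, is the delicate part. The rest is routine.
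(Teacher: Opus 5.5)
Your reduction is correct, but it takes a genuinely different route from the paper's.

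\textbf{The paper's route.} The paper goes $M\to A\to B\to M'$. $A$ is a linear-space $\tm$ that, on $\symz^n$, exhaustively checks every $w\in\{\symz,\symo\}^n$ for being a halting history of $M$. So termination and threshold-obedience are settled once, at the Turing-machine level; the padding property of $w\symz\s$ gives monotonicity. $B$ then runs $A$ on only $\lg_5\lg_{30}n$ cells, and $M'$ encodes $B$'s counters as $2^j3^l5^r\le 30^{\lg_{30}n}\le n$. Hence, after the initial computation of $\lg_{30}n$, no increment ever fails. Faithfulness of each simulation is then trivial and boundedness is automatic.

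\textbf{Your route.} You use the cap $n$ itself as the budget: a direct Minsky simulation of $M$ in which overflow means reject. The prime $11$ acts as a step counter, so at most $O(\log n)$ steps are simulated and the machine terminates. Acceptance is monotone in $n$ because an accepting run contains no failed primary increment. This dispenses with computation histories, logarithms and the intermediate $\dca_3$, and the threshold comes for free. The price is that the counter-level simulation is no longer failure-free, so the subroutines must be disciplined.

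\textbf{Where your sketch is slightly loose.} Boundedness needs the invariant that every value the secondary counter ever holds is at most a value the primary held \emph{earlier}. ``Was, or will be'' is not enough: a value above $n$ sitting in the secondary already violates the promise, even if the next transfer would overflow and reject.

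Routing every multiplication into the primary does give the right invariant. Quotients, and restorations after a failed divisibility test, only shrink or reproduce earlier values. It also forces the initial encoding $1$ to be created by a primary increment, which covers $n=0$.

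Your ``equivalent'' alternative is not equivalent, however. That is the usual ping-pong with a state bit recording where the value lives, and there every other product lands in the secondary counter, where it can exceed $n$.
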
 

\noindent
We use this fact in the next section, but defer proving it until
Section~\ref{sec:main-lemma}. In between,
Section~\ref{sec:programming-counters} discusses the capabilities of
multi-counter automata.

\section{The main theorem}
\label{sec:main-theorem}

We are now ready to state and prove the main theorem.

\begin{theorem}
Suppose $D$, $E$ are two \emph{comparable} descriptional systems that
satisfy conditions \textup{(H$_1$)--(H$_3$)} of
Lemma~\textup{\ref{lem:Hartmanis}}. If they also satisfy the following:
\begin{nicelist}
\item[\textup{(C$_1$)}]   there exists a name $e_0\in E$ that  has no
equivalent in $D$,
\item[\textup{(C$_2$)}]  given a description $z$ of a
\emph{terminating}, \emph{bounded} $\dca_2$ that \emph{obeys a
threshold}, we can effectively construct a name $e_z\in E$ such that
\begin{equation} \label{eq:e_z}
\LLL{e_z} \;=\; \LLL{e_0} \;\cup\; \{ w\in\ggG\s \mid
\textup{length}(w)\in\LLL{z} \},
\end{equation}
\item[\textup{(C$_3$)}] every \emph{co-finite} language has a name in
$D$ that maps to it,
\end{nicelist}
then the trade-off from $E$ to $D$ is non-recursive.
\end{theorem}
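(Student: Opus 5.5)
Suppose, for contradiction, that the trade-off from $E$ to $D$ is recursive. By Lemma~\ref{lem:Hartmanis}, which applies because (H$_1$)--(H$_3$) hold, the inadequacy problem $\II=(\IIy,\IIn)$ from $E$ to $D$ would then be recognizable. We will reduce $\eot$ to $\II$. Since $\eot$ is unrecognizable by Lemma~\ref{lem:main}, it follows that $\II$ is unrecognizable, which is the desired contradiction.

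The reduction maps a description $z\in\desc{\dca_2'}$ to the name $e_z\in E$ provided by (C$_2$). This map is computable because the construction in (C$_2$) is effective. Its correctness rests on the threshold promise, which lets us split into two cases.
\begin{nicelist}
\item[(a)] Suppose $z\in\eoty$, so $\LLL{z}=\emptyset$. Then by (\ref{eq:e_z}) we have $\LLL{e_z}=\LLL{e_0}$. By (C$_1$) no $D$-name maps to this language, so $e_z\in\IIy$.
\item[(b)] Suppose $z\in\eotn$, so $\LLL{z}\neq\emptyset$. Since $z$ obeys a threshold, there is a finite $l$ with $\LLL{z}=\set{n\mid n\geq l}$. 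Then the set $\{w\in\ggG\s \mid \textup{length}(w)\in\LLL{z}\}$ consists of all strings of length at least $l$, which is co-finite in $\ggG\s$. Hence $\LLL{e_z}$, being a superset of it, is also co-finite. By (C$_3$) some $d\in D$ satisfies $\LLL{d}=\LLL{e_z}$, so $e_z\in\IInn$.
\end{nicelist}
Thus $z\in\eoty\Rightarrow e_z\in\IIy$ and $z\in\eotn\Rightarrow e_z\in\IIn$, so $\eot\leq\II$.

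The only delicate points are bookkeeping. We must check that $\ggG$ is nonempty, so that ``co-finite'' is meaningful; if $\ggG$ is unary, the length condition translates directly. We must also handle the reading of the threshold $l$ in the definition of $\LLL{z}$: since the input of a $\dca_2$ is a nonnegative integer $n$ and lengths range over $\nats\cup\{0\}$, we need the two notions to match up. The substantive difficulty, namely that emptiness stays unrecognizable under the strong promise, has already been isolated in Lemma~\ref{lem:main}. That is exactly why the promise of terminating, bounded, threshold-obeying machines appears in (C$_2$): it makes (b) yield a co-finite language while keeping $\eot$ hard.
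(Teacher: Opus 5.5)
Your proposal is correct and is essentially the paper's proof. Via Lemma~\ref{lem:Hartmanis} it suffices to show $\eot\leq\II$, and the map $z\mapsto e_z$ from (C$_2$) sends empty-language machines to names with language $\LLL{e_0}$, which have no $D$-equivalent by (C$_1$), and nonempty threshold-obeying ones to names with co-finite language, which have a $D$-equivalent by (C$_3$); the only slip is notational, since in case (b) you mean $e_z\in\IIn$, and your extra ``bookkeeping'' caveats about $\ggG$ and the threshold are harmless but unnecessary.
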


\noindent 
Before proving the theorem, we discuss how mild conditions
(C$_1$)--(C$_3$) really are.  Since every co-finite language is
regular, (C$_3$)~is trivially satisfied whenever the names in $D$
describe machines that have some kind of finite state control.

The second condition essentially says that the machines described by
$E$ have enough resources to simulate a bounded $\dca_2$. Because
then, from a given $z$, we can always construct the description $e_z$
of the $E$-machine that does the following:
\begin{quote}
on input $w\in\ggG\s$: first simulate on $\textup{length}(w)$ the
$\dca_2$ described by~$z$; if this accepts, then halt and accept;
otherwise, simulate on $w$ the machine described by $e_0$ and accept,
reject, or loop accordingly.
\end{quote}
and which obviously satisfies~(\ref{eq:e_z}) (note the importance of
the promise that $z$ describes a $\dca_2$ that never rejects by
looping).  Given how weak bounded $\dca_2$s are, most two-way machines
with non-regular capabilities will easily meet~(C$_2$).

The important condition is~(C$_1$), which requires that the machines
described by $D$ are \emph{not as powerful as} those described by $E$;
in other words, a separation is needed between the complexity classes
that correspond to the two systems.

\proof{We essentially repeat Hartmanis' argument
from~\cite[Example~4]{ha83} (see also~\cite[Theorem~7]{ku03}). Suppose
$D$, $E$ are as in the statement of the theorem. Since
\textup{(H$_1$)--(H$_3$)} are satisfied, Lemma~\ref{lem:Hartmanis}
implies that we only need to prove that the inadequacy problem $\II$
from $E$ to $D$ is unrecognizable. By Lemma~\ref{lem:main}, we just
need to reduce $\eot$ to~it:
$$ 
\eot \leq \II.
$$ 
Given a $z\in\desc{\dca_2'}$, we simply construct the name $e_z\in
E$ guaranteed by conditions~(C$_1$) and~(C$_2$), so that
$$ 
\LLL{e_z} \;=\; \LLL{e_0} \;\cup\; \{ w\in\ggG\s \mid
\textup{length}(w)\in\LLL{z} \}.
$$ 
If $z\in\eoty$, then the language of $z$ is empty, so that
$\LLL{e_z}=\LLL{e_0}$ and $e_z$ has no $D$-equivalent (because $e_0$
does not); hence $e_z\in\IIy$. On the other hand, if $z\in\eotn$, then
the language of $z$ contains all sufficiently large $w\in\ggG\s$, so
that $\LLL{e_z}$ is co-finite and has $D$-equivalents (by~(C$_3$));
hence $e_z\in\IIn$. This concludes the proof.\footnote{Note the
slightly stronger fact: $\II$ remains unrecognizable even under the
promise that the given $e\in E$ either has no $D$-equivalent or its
language is co-finite. In addition, the promise that the given
$\dca_2'$ obeys a threshold can be slightly relaxed: we only need to
know that its language is either empty or co-finite.}}

\section{Programming counters}
\label{sec:programming-counters}

In order to present the capabilities of multi-counter automata, we
introduce some `program' notation.  First, the two atomic operations,
the attempt to decrease a counter $X$ and the attempt to increase it,
are denoted respectively by
$$ X \stackrel{f}{\longleftarrow} X-1
\qquad\qquad\text{and}\qquad\qquad X \stackrel{f}{\longleftarrow} X+1,
$$ where, in each case, flag $f$ is set to true iff the attempt
succeeds. Then, the compound operation of setting $X$ to~$0$, denoted
by $X\longleftarrow 0$, can be described by 
\begin{equation} \label{eq:set-X-to-0}
\texttt{repeat }\; X \stackrel{f}{\longleftarrow} X-1\; \texttt{ until
$\neg f$}.
\end{equation}
If a second counter $Y$ is present, we can transfer the contents of
$Y$ into $X$: we set $X$ to~$0$, then repeatedly decrease $Y$ and
increase $X$ until $Y$ is~0. We denote this by
$$ (X,Y)\stackrel{f}{\longleftarrow}(Y,0),
$$ and describe it by a line similar to~(\ref{eq:set-X-to-0}). Note
that, if $X$ is the primary counter and $Y>n$, then one of the
attempts to increase $X$ will fail; in that case, we restore the
original value of $Y$ returning $X$ to~0, and set flag $f$ to
false. So, $X$'s original contents are always lost, but this never
happens to the original contents of~$Y$.

Changing how fast $X$ increases as $Y$ decreases, we can
multiply/divide $Y$ into $X$ by any constant $a\in\nats$. We denote
these operations by
$$ (X,Y) \stackrel{f}{\longleftarrow} (aY,0) \qquad\text{and}\qquad
(X,Y) \stackrel{\,f,r}{\longleftarrow}
\bigl(\bigl\lfloor\tfrac{Y}{a}\bigr\rfloor,0\bigr),
$$ where, in the second operation, we also find the remainder and
return it in~$r$.  As before, if $X$ is the primary counter and $aY>n$
(respectively, $\lfloor Y/a\rfloor>n$) then one of the attempts to
increase $X$ will fail; we then restore the original value of $Y$
returning $X$ to~0, and set flag $f$ to false.

At a higher level, we can attempt to multiply $Y$ by a constant $a$
(into $Y$) using $X$ as an auxiliary counter and making sure $Y$
changes only if the operation succeeds:\footnote{Note the use of
\texttt{t} in the place of a flag, indicating that the action is
guaranteed to succeed.}
$$ (X,Y) \stackrel{f}{\longleftarrow} (aY,0); \quad \texttt{if $f$
then } (Y,X) \stackrel{\texttt{t}}{\longleftarrow} (X,0).
$$ Division (with remainder) can be done similarly.  We denote the two
operations~by
\begin{equation} \label{eq:mult-div}
Y  \stackrel{\;\,f,X}{\longleftarrow} aY
\qquad\qquad\text{and}\qquad\qquad Y
\stackrel{\;f,r,X}{\longleftarrow}
\bigl\lfloor\tfrac{Y}{a}\bigr\rfloor.
\end{equation}
Now, if $X$ is primary, we can set $Y$ to the largest power of $a$
that can fit in $n$:
\begin{equation}  \label{eq:implement-log}
\begin{split}
& X\longleftarrow 0; \; X\stackrel{f}{\longleftarrow} X+1; \\ &
\texttt{if $f$ then } \{ Y\longleftarrow 0; \;
Y\stackrel{\texttt{t}}{\longleftarrow} Y+1; \; \texttt{repeat } Y
\stackrel{\;\,g,X}{\longleftarrow} aY \texttt{ until }\neg g \}
\end{split}
\end{equation}
(note that this fails iff $n=0$).  We denote~(\ref{eq:implement-log})
by (remember that $\lg_a n = \lfloor\log_a n\rfloor$):
$$ Y \stackrel{\;\,f,X}{\longleftarrow} a^{\lg_a n}.
$$ If a third counter $Z$ is present, we can
modify~(\ref{eq:implement-log}) to also count (in~$Z$) the number of
iterations performed. This gives us a way to calculate $\lg_a
n$ (failing iff $n=0$):
$$ Z  \stackrel{\;f,X,Y}{\longleftarrow}  \lg_a n.
$$ In another variation, we can modify the multiplication
in~(\ref{eq:mult-div}) so that the success of the operation depends on
the \emph{contents} of $X$ (as opposed to its upper bound~$n$):
$$ Y  \stackrel{\;\,f,X,Z}{\longleftarrow} aY,
$$ meaning that, using $Z$ as auxiliary and without affecting $X$: if
$aY\leq X$, then $Y$ is set to $aY$; otherwise, $Y$ is
unaffected.\footnote{To implement this, we first set $Z$ to~0. Then,
we repeatedly decrease $Y$, increase $Z$, and decrease $X$ by
$a$. \emph{If $X$ becomes~0 before $Y$}, then $aY>X$ and the operation
should fail: we restore the original values of $Y$ and $X$ by
repeatedly decreasing $Z$, increasing $Y$, and increasing $X$ by $a$,
until $Z$ becomes~0. \emph{Otherwise}, $aY\leq X$ and the operation
will succeed: we copy the correct value to $Y$ and restore the value
of $X$ by repeatedly decreasing $Z$ and increasing each of $Y$, $X$
by~$a$, until $Z$ becomes~0. Note that \emph{if originally $Y,Z\leq
X$, then at no point during the operation does any of the counters
assume a value greater than the original value
of~$X$}.\label{note:implement-bounded-mult}} Then, the following
variant of~(\ref{eq:implement-log})
\begin{multline*}
X\stackrel{f}{\longleftarrow} X-1; \;\; \texttt{if $f$ then } \{
X\stackrel{\texttt{t}}{\longleftarrow} X+1; \\ Y \longleftarrow 0;
\;\; Y\stackrel{\texttt{t}}{\longleftarrow} Y+1; \;\; \texttt{repeat }
Y \stackrel{\;\,g,X,Z}{\longleftarrow} aY \;\; \texttt{until $\neg g$}
\}
\end{multline*}
implements the attempt to set $Y$ to the largest power of $a$ that is
at most $X$, using $Z$ as auxiliary and leaving $X$ unaffected
(failing iff $X$ is~0). We denote this one by
$$ Y \stackrel{\;\,f,Z}{\longleftarrow} a^{\lg_a X}.
$$ It is important to note that, \emph{if originally $Y,Z\leq X$, then
during this operation no counter ever assumes a value greater than the
original value of $X$}
(cf. Footnote~\ref{note:implement-bounded-mult}).

\section{Proof of Lemma~\ref{lem:main}}
\label{sec:main-lemma}

We now prove that $\eot$ is unrecognizable. We do this by a reduction
from the complement of the halting problem, which is known to be
unrecognizable~\cite{houl79}:
$$ \looping \leq \eot,
$$ where $ \looping = \{  z\in\{\symz,\symo\}\s  \mid  \text{$z$
encodes a $\tm$ that loops on $z$}  \} $.  That is, we give an
algorithm that, on input a description $z$ of a $\tm$ $M$ produces a
description $z'$ of a terminating, bounded, threshold-obeying $\dca_2$
$M'$, such that
\begin{equation} \label{eq:reduction-condition-1}
M\text{ loops on }z  \implies \LLL{z'}=\emptyset
\qquad\text{and}\qquad M\text{ halts on }z  \implies
\LLL{z'}\neq\emptyset.
\end{equation}
In describing this algorithm, we will be calling a machine ($\tm$ or
$\dca_k$) \emph{good}, if it is terminating, bounded (for $\dca_k$s),
obeys a threshold, and its language
satisfies~(\ref{eq:reduction-condition-1}) when it replaces
$\lang{z'}$. Thus, for example, $M'$ will be good.
 
On its way to $z'$, the algorithm will construct descriptions of two
other machines: a description $z_A$ of a $\tm$ $A$, and a description
$z_B$ of a $\dca_3$ $B$.  In the sequence $M$, $A$, $B$, $M'$ each
machine after $M$ will be defined in terms of the previous one and
will be good. Our constructions use the ideas of~\cite{wa57}
and~\cite{mi61} (also found in~\cite{mi67,houl79}).

\medskip\noindent\emph{The first machine.}  $A$ is a $\tm$ with one
tape, infinite in both directions; the tape alphabet is
$\{\bla,\symz,\symo,\dotz,\doto\}$, while the input alphabet is
$\{\symz\}$. On input $\symz^n$, $A$ starts with tape contents
$$ \cdots\bla\bla\bla
\underbrace{\symz\symz\symz\cdots\symz\symz}_{n\text{ times}}
\bla\bla\bla\cdots
$$ and its head on the $\bla$ next to the leftmost~$\symz$ (or any
$\bla$, if $n=0$). It then computes:

\smallskip
\quad  1.~For all $w\in\{\symz,\symo\}^n$, from $\symz^n$ up to
$\symo^n$: \nopagebreak

\quad\qquad  --- if $w$ encodes a halting computation  history of~$M$
on~$z$, accept. \nopagebreak

\quad 2.~Reject.

\smallskip\noindent
The check inside the loop presupposes some fixed reasonable encoding
of sequences of configurations of $M$ into binary strings, with the
additional property: \emph{if $w$ encodes a computation history, then
every string of the form $w\symz^*$ encodes the same history}.

Note that, using the extra dotted symbols, $A$ can easily perform this
check without ever writing a non-blank symbol on a~$\bla$, or a $\bla$
on a non-blank symbol; and without ever visiting any $\bla$ that lies
beyond the two that originally delimit the input. As a consequence,
throughout its computation on $\symz^n$, $A$ keeps exactly $n$
non-blank symbols on its tape (occupying the same $n$ cells as the
symbols of the input). Also note that, by the selection of the
encoding scheme for $M$'s computation histories, if $A$ accepts
an input $\symz^n$, it necessarily accepts all longer inputs as well.

\medskip\noindent\emph{The second machine.}
$B$ is a $\dca_3$ that, on input $n\geq 30$, simulates the behavior of
$A$ on input $\symz^{\lg_5 \lg_{30} n}$; on input $n<30$, $B$ just
rejects.

To explain $B$'s behavior, let $J$, $L$, $R$ be its three
counters. $J$ is primary and helps performing operations on $L$ and
$R$, while $L$ and $R$ together encode tape configurations of $A$. To
see the encoding, consider the following example of a configuration:
\begin{center}
\smallskip
\footnotesize
$
\begin{array}{cccccccccccccc}
\cdots& l_4& l_3& l_2  & l_1  & l_0  &   h  & r_0  & r_1  & r_2  & r_3  & r_4 & r_5&\cdots \\ \hline
\cdots&\bla&\bla&\times&\times&\times&\times&\times&\times&\times&\times&\bla&\bla&\cdots \\ \hline
      &    &    &      &      &      &\uparrow&    &      &      &    &    &  
\end{array}
$
\smallskip
\end{center}
(here {\footnotesize$\times$} stands for any non-blank symbol,
{\footnotesize$\uparrow$} shows the head position). Mapping symbols
$\bla$, $\doto$, $\dotz$, $\symo$ and $\symz$ to numbers $0$, $1$,
$2$, $3$ and $4$, respectively,\footnote{Any mapping that maps symbol
$\bla$ to code~0 and symbol $\symz$ to code~4 will do.} we get each
tape cell map to a digit of the 5-ary numbering system. Then, the head
position splits the tape into three portions, which define the integers
$$
l = \sum_{i=0}^\infty l_i \cdot 5^i
\qquad\text{and}\qquad
h
\qquad\text{and}\qquad
r = \sum_{r=0}^\infty r_i \cdot 5^i
$$
(note that the two sums are finite exactly because $\bla$ maps
to~0). The values $l$ and $r$ are kept in $L$ and $R$,
while $h$ is kept in a register $H$ in $B$'s finite memory.

More specifically, on input $n$, $B$ starts with a two-part
initialization. First, it computes $\lg_{30} n$ into $J$, leaving $0$s
in $L$ and $R$ (this is if $n\geq 1$; if $n=0$, $B$ rejects):
\begin{equation*} 
R
\stackrel{\;\,f,J,L}{\longleftarrow}
\lg_{30} n;
\;\;
\texttt{if $\neg f$ then reject else }
\{
(J,R)\stackrel{\texttt{t}}{\longleftarrow} (R,0);
\;
L\longleftarrow 0
\}.
\end{equation*}
Then, it computes into $R$ the value $5^m-1$, where
$m=\lg_5\lg_{30}n$, leaving $0$s in $L$ and $H$ (this is only if
$J\geq 1$, that is if $n\geq 30$; otherwise, $n<30$ and $B$ rejects):
$$
R \stackrel{\;\,f,L}{\longleftarrow} 5^{\lg_5 J};
\;
\texttt{if $\neg f$ then reject else }
\{
R \stackrel{\texttt{t}}{\longleftarrow} R-1;
\;
L \longleftarrow 0;
\;
H \longleftarrow 0
\}.
$$
This completes the initialization, with $L=H=0$ and $R=5^m-1$, or in
5-ary:
$$
L = 0
\qquad\text{and}\qquad
H=0
\qquad\text{and}\qquad
R=
\underbrace{4\,\,4\,\,4\,\,\cdots\,\,4}_{m\text{ times}}.
$$
Hence $L$, $H$, $R$ correctly represent $A$'s starting tape
configuration on input $0^m$:
\begin{center}
\footnotesize
\smallskip
$
\begin{array}{cccccccccclc}
\cdots&l_1 &l_0 & h  & r_0 & r_1 & r_2 & \cdots &r_{m-1}&r_m &r_{m+1}&\cdots \\ \hline
\cdots&\bla&\bla&\bla&\symz&\symz&\symz& \cdots & \symz &\bla&\bla   &\cdots \\ \hline
      &    &    & \uparrow & 
\end{array}
$
\smallskip
\end{center}
(since symbol $\symz$ maps to~$4$) and $B$ is ready to start a
faithful simulation of $A$'s steps.

The automaton remembers in its finite memory the current state of $A$
as well as the code of the currently read symbol (in $H$). If $s$ is
the code of the new symbol to be written on the tape, $B$ computes
$$
L
\stackrel{\;\,\texttt{t},J}{\longleftarrow}
5L;
\;\;
\texttt{repeat $s$ times:}\;\;
L\stackrel{\texttt{t}}{\longleftarrow} L+1;
\;\;
R 
\stackrel{\texttt{t},r_0,J}{\longleftarrow}
\bigl\lfloor\tfrac{R}{5}\bigr\rfloor;
\;\;
H
\longleftarrow
r_0
$$
to simulate writing this symbol and moving to the right; it computes
$$
R
\stackrel{\;\,\texttt{t},J}{\longleftarrow}
5R;
\;\;
\texttt{repeat $s$ times:}\;\;
R\stackrel{\texttt{t}}{\longleftarrow} R+1;
\;\;
L 
\stackrel{\texttt{t},l_0,J}{\longleftarrow}
\bigl\lfloor\tfrac{L}{5}\bigr\rfloor;
\;\;
H
\longleftarrow
l_0
$$
to simulate writing this symbol and moving to the left.

It is important to note the range of the values assumed by the
counters. By the design of its main operation, the second part of the
initialization phase never assigns to a counter a value greater than
the original value of $J$, which is $\lg_{30} n$. Then, in the
simulation phase, the behavior of $A$ (the tape starts with $m$
$\symz$s and always contains exactly $m$ non-blank symbols) and the
selection of the symbol codes ($\symz$~gets the largest code) are such
that the initial value $5^m-1$ of $R$ upper bounds all possible values
that may appear in $B$'s counters. One consequence of this is that all
operations in the previous paragraph are guaranteed to be successful
(hence the $\texttt{t}$ reminder). Another consequence is that, since
$5^m-1<\lg_{30} n$, \emph{the entire computation of $B$ after the
first part of its initialization phase keeps all values of all
counters at or below} $\lg_{30} n$. This will prove crucial in the
next section.

\medskip\noindent\emph{The final machine.}
$M'$ is a $\dca_2$ that simulates the behavior of~$B$. If $U$, $V$ are
its two counters, then $U$ is primary and helps performing operations
on $V$, while $V$ encodes the contents of the counters of $B$:
whenever $J$, $L$, $R$, contain $j$, $l$, $r$ respectively, $V$
contains $2^j3^l5^r$.

The automaton starts by computing into $V$ the product
$30^t=2^t3^t5^t$, where $t=\lg_{30}n$ (this is only if $n\geq 1$; if
$n=0$, $M'$ rejects, exactly as $B$ would do):
$$
V \stackrel{\;\,f,U}{\longleftarrow} 30^{\lg_{30} n};
\;\;
\texttt{if $\neg f$ then reject}.
$$
It then removes all $3$s and $5$s from this product,\footnote{To
remove all $3$s, $M'$ divides $V$ by $3$ repeatedly 
($ 
V
\stackrel{\,\texttt{t},r,U}{\longleftarrow}
\bigl\lfloor\tfrac{V}{3}\bigr\rfloor 
$), 
until a non-zero remainder $r$ is returned, which implies
there were no $3$s in $V$ before the last division. Then the
correction 
$$
V 
\stackrel{\;\,\texttt{t},U}{\longleftarrow}
3V;
\;\;
\texttt{repeat $r$ times:}\;\;
V \stackrel{\texttt{t}}{\longleftarrow} V+1
$$
undoes the damage caused by the last division. A similar computation
removes all $5$s.} so that $V$ becomes $2^{\lg_{30}n}3^05^0$, which
correctly encodes the values of the counters of $B$ \emph{right after
the first part of its initialization phase}. Now $M'$ starts a faithful
simulation of the steps of~$B$.

The current state of $B$ is stored in $M'$'s memory. When $B$ tries
to decrease~$J$,
$$
J \stackrel{f}{\longleftarrow} J-1,
$$
$M'$ divides $V$ by~2; if there is no remainder, the division simulated
a successful decrement; otherwise, the simulated attempt failed
and $M'$ restores $V$'s initial value:
\begin{multline*}
V 
\stackrel{\,\texttt{t},r,U}{\longleftarrow}
\bigl\lfloor\tfrac{V}{2}\bigr\rfloor;
\;\;
\texttt{if $r=0$ then $f\longleftarrow \texttt{true}$ else}
\\
\{
f\longleftarrow \texttt{false};
\;\;
V 
\stackrel{\;\,\texttt{t},U}{\longleftarrow}
2V;
\;\;
\texttt{repeat $r$ times:}\;\;
V \stackrel{\texttt{t}}{\longleftarrow} V+1
\}.
\end{multline*}
The attempts to decrease $L$ or $R$ are handled similarly, with $3$ or
$5$ instead of~$2$.

Attempts of $B$ to increase its counters are of course simulated by
appropriate multiplications of $V$. The only subtlety involves failure
during increment attempts: how does the simulation make sure that
\emph{an attempt of $B$ to increase a counter fails} iff \emph{the
corresponding attempt of $M'$ to multiply $V$ fails?}\footnote{Note
that this condition is necessary for a successful simulation but is
not met in some obvious way. In $B$ the upper bound for $J$ is always
the same ($B$'s input), whereas in $M'$ the upper bound for its
representation is the base~2 logarithm of a value that depends (on
$M'$'s input and) on the values of the other two counters. Similarly,
in $B$ counter $L$ is unrestricted, whereas in $M'$ its representation
is bounded by a value that depends on the other two counters; 
similarly for~$R$.}  The crucial observation (from last section) is
that, since we are after the first part of $B$'s initialization phase,
\emph{no counter of $B$ ever assumes a value greater than $t=\lg_{30}
n$}. This immediately implies that, after the initialization phase of
$M'$, \emph{no counter of $M'$ ever assumes a value greater than
$2^t3^t5^t$}. Now, since $t<n$ and $2^t3^t5^t\leq n$, we conclude that
both (i)~all increment attempts of $B$ are successful, and (ii)~all
corresponding multiplication attempts of $M'$ are successful,
too. Hence, the equivalence above is satisfied, in a trivial way. Put
another way, when $M'$ multiplies $V$ to simulate a counter increment
in $B$, it knows in advance the increment is a successful one and the
multiplication will not fail. Overall, $B$'s atomic operation
$$
J \stackrel{\texttt{t}}{\longleftarrow} J+1
\qquad\text{is simulated by}\qquad
V \stackrel{\;\,\texttt{t},U}{\longleftarrow} 2V,
$$
and similarly for $L$ and $R$. 

As a final remark, we note the immediate by-product of our last
argument: Since $V$ clearly never exceeds $n$ during the
initialization phase of $M'$ and it also never exceeds $n$ during the
simulation of $B$, it follows that $M'$ is bounded.

\medskip\noindent   
This concludes the definitions of all three
machines in our reduction. It should be clear that $M'$ is good and
that a description $z'$ of it can be computed out of~$z$.

\section{Conclusion}
\label{sec:conclusion}

Using old ideas~\cite{wa57,mi61}, we showed the unrecognizability of
the emptiness problem for $\dca_2$s that are promised to be bounded,
always terminate, and obey a threshold. We then combined this with the
idea of~\cite{haba79a} to show that, if
machines~$\modelofcomputation{a}$ have the resources to simulate
$\dca_2$s of the particular kind and can also solve problems that
machines~$\modelofcomputation{b}$ cannot, then typically the trade-off
from~$\modelofcomputation{a}$ to~$\modelofcomputation{b}$ is
non-recursive.  Applying the theorem, we derived such trade-offs in
many conversions.

We do not know if the emptiness problem of
Section~\ref{sec:multicounter-automata} remains unrecognizable even
when the underlying machine is a \emph{$2$-register
automaton}~\cite{mi61} (that is, a $\dca_2$ that starts with $n$ in
its primary counter and where increments of that counter never
fail). If it is, then our main theorem can be made slightly stronger.  

\subsection*{Acknowledgments} 
I would like to thank Oscar Ibarra and Albert Meyer for discussions
that revealed the generality behind the argument of~\cite{ka04b}.

\end{document}